\newtheorem{theorem}{Theorem}
\newtheorem{lemma}{Lemma}
\newtheorem{corollary}{Corollary}
\newcommand{\BlackBox}{\rule{1.5ex}{1.5ex}}  
\newenvironment{proof}{\par\noindent{\bf Proof:\
}}{\hfill\BlackBox\\[2mm]}
\begin{document}
\title{
Relation between strength of interaction and accuracy of measurement 
for a quantum measurement
}

\author{Takayuki Miyadera}
\affiliation{%
Research Center for Information Security (RCIS), \\
National Institute of Advanced Industrial
Science and Technology (AIST). \\
Daibiru building 1003,
Sotokanda, Chiyoda-ku, Tokyo, 101-0021, Japan.
\\
(E-mail: miyadera-takayuki@aist.go.jp)
}%


\date{\today}

\begin{abstract}
The process of measuring a two-level quantum system 
was examined by applying Hamiltonian formalism.  
For the measurement of an observable 
that does not commute with the system Hamiltonian, 
a non-trivial relationship among the strength of 
interaction, the time interval of the process, and 
the accuracy of the measurement was 
obtained.  
Particularly, to achieve an error-free measurement of
 such an observable, 
 a condition stating that the interaction Hamiltonian 
 does not commute with the system Hamiltonian needs to 
 be satisfied. 
\end{abstract}
\pacs{03.65.Ta}
\maketitle
\section{Introduction}
Von Neumann formulated the measurement process as 
the dynamics of a compound system that comprises 
a system and an apparatus \cite{vN}. This theory is now widely accepted 
and has been extensively investigated by researchers. 
In the formalism, a measurement process is determined 
by identifying the following variables: 
the Hilbert space of an apparatus, 
the initial state of the apparatus, and 
the unitary operator acting on a composite system. 
A good measurement process is obtained by 
cleverly choosing these variables. 
Because the process must be realized 
by physical devices at least in principle, 
its dynamics is often identified by 
a Hamiltonian operator. 
Several previous models have identified the dynamics 
in this manner. 
They include measurement processes of 
various physical quantities and 
approximated joint measurements of 
noncommutative observables \cite{Buschbook}. 
In contrast, another 
direction of research 
investigates the limitations of this process. 
In this case, because the process does not 
have to be realized by realistic physical devices, 
its dynamics are often identified by giving 
possible unitary operators of the composite system. 
A typical result
in this direction takes the
 form of an impossibility theorem: 
some operations are not achievable 
even if one may use unitary operators. 
For instance, the uncertainty principle 
states the impossibility of jointly measuring  
noncommutative observables 
\cite{BuschUncertainty,Buscherrorbar,Werner,
Appleby,OzawaUncertain,Janssens,
MiyaHeisen}, 
and the 
Wigner-Araki-Yanase 
theorem states the impossibility of 
performing precise measurements in the presence of 
an additive conserved quantity 
\cite{Wigner,ArakiYanase,Ozawa,Beltrametti,MiyaWAY,Loveridge}. 
\par
In the present study, we examine a problem that 
takes an intermediate position between the above 
two research directions. 
We use a Hamiltonian to identify a 
measurement process and discuss its limitations.  
Let us specifically describe the problem. 
Assume that there exists a system whose 
dynamics is governed by a system Hamiltonian 
$H_S$. To measure an observable $Q$ of this system, 
one needs to prepare an apparatus and introduce 
an interaction between the system and the apparatus. 
This interaction is described by an interaction 
Hamiltonian $V$. 
We want to determine the strength of 
interaction $V$ and the time interval
$\tau$ required for measuring $Q$. 
We investigate the measurement of 
the simplest system, a two-level 
quantum system, and obtain a non-trivial relationship 
among the strength of interaction $V$, the time interval $\tau$, and 
the accuracy of the measurement 
of $Q$ that does not commute with $H_S$. 
The conclusion contains a simple interpretation of 
this result by using 
the uncertainty principle. 
\section{Formulation and Results}
\subsection{Formulation}
In the following section, we study the dynamics of 
quantum measurement by applying Hamiltonian formalism 
(See \cite{Buschbook} for a general treatment of measurement). 
The dynamics of quantum measurement is 
described by an interaction process between 
a system and an apparatus. 
Suppose that the system is 
described by a
Hilbert space ${\cal H}_S$, and the apparatus 
is described by a Hilbert space ${\cal H}_A$. 
The observable to be measured 
is denoted by a self-adjoint operator $Q$ 
on ${\cal H}_S$. 
$Q$ is diagonalized as $Q=\sum_{q\in S_Q} q P_q$, 
where $\{P_q\}_{q\in S_Q}$ forms a projection-valued measure (PVM). 
That is, each $P_q$ is a projection operator 
and $\sum_{q\in S_Q}  P_q ={\bf 1}_S$ holds.  
Measurement is a physical process that transfers 
the value of $Q$ at time $t=0$ to an observable of the apparatus 
at a certain time $t=\tau$. 
The entire state of the composite 
system evolves from time $t=0$ to $t=\tau$ 
following the Schr\"odinger (or von Neumann) equation. 
The total Hamiltonian is written as 
\begin{eqnarray*}
H = H_S \otimes {\bf 1}_A +{\bf 1}_S\otimes H_A +V, 
\end{eqnarray*}
where $H_S$ (resp. $H_A$) represents 
the Hamiltonian acting only on the system (resp. the apparatus), 
and $V$ represents the interaction Hamiltonian. 
At time $t=0$, the initial state has a product form such as:
\begin{eqnarray*}
\rho(0) =\rho_S\otimes |\Omega \rangle \langle \Omega |, 
\end{eqnarray*}
where the unit vector 
$|\Omega \rangle \in {\cal H}_A$ does not depend on 
$\rho_S$. 
At time $t=\tau$, the state of the composite system becomes 
$\rho(\tau)=U(\tau)\rho(0)U(\tau)^*$, where 
$U(\tau):=\exp(-i\frac{H \tau}{\hbar})$.  
Without the interaction term $V$, the state 
keeps its product form; therefore, no 
information transfer from the system to the apparatus 
occurs. In this study, we investigate 
how large $V$ and $\tau$ should be in order to describe 
a measurement process. 
\subsection{Measurement of a two-level system}
 Throughout this paper, 
 we assume 
 that the system is a two-level quantum system and 
 that the observable $Q$ has only 
 two outcomes: $1$ and $0$. 
 That is, $Q$ is a projection operator on ${\cal H}_S$.
 We write the eigenstates as 
 $|q_1\rangle$ and $|q_0\rangle$, 
 where $Q|q_1\rangle =|q_1\rangle $ 
 and $Q|q_0\rangle =0$ hold. 
 We consider two initial states of the system, 
 $\rho^S_0=|q_0\rangle \langle q_0|$ 
 and $\rho^S_1=|q_1\rangle \langle q_1|$.
From the viewpoint of information transfer, 
the quality of measurement is characterized by 
its ability to distinguish between the 
states of the apparatus at $t=\tau$. 
The time evolution of the two initial states 
results in the following two 
final states of the apparatus: 
\begin{eqnarray*}
\rho_0^A(\tau)&=&\mbox{tr}_{{\cal H}_S} 
(U(\tau) (\rho^S_0 \otimes |\Omega\rangle \langle 
\Omega |)U(\tau)^*) \\
\rho_1^A(\tau)&=&\mbox{tr}_{{\cal H}_S} 
(U(\tau) (\rho^S_1 \otimes |\Omega\rangle \langle 
\Omega |)U(\tau)^*),
\end{eqnarray*}
where $\mbox{tr}_{{\cal H}_S}$ represents 
a partial trace with respect to ${\cal H}_S$. 
The process can only have an error-free measurement when 
these two final states of the apparatus are 
perfectly distinguishable. 
Note that we do not impose any condition on the states of
the system after the measurement, 
while 
a repeatability condition is often imposed in literatures 
that discuss measurement. 
A measurement that satisfies 
the repeatability condition is a special kind of measurement 
called an ideal measurement. 
We employ a quantity called fidelity 
as a measure of the distinguishability 
between states of the apparatus after the interaction. 
The fidelity between states $\rho$ and $\sigma$ on 
a Hilbert space ${\cal H}$ is defined by 
\begin{eqnarray*}
F(\rho,\sigma):=\mbox{tr}(\sqrt{\sigma^{1/2}\rho\sigma^{1/2}}),
\end{eqnarray*}
which is symmetric with 
respect to $\rho$ and $\sigma$ and 
satisfies the condition $0\leq F(\rho,\sigma)\leq 1$. 
The following theorem is useful for understanding its operational 
meaning. 
\begin{lemma}\label{dare}\cite{Fidelity}
Suppose that $\rho$ and $\sigma$ are states 
on a Hilbert space ${\cal H}$. 
The fidelity between these states can be 
represented as:
\begin{eqnarray*}
F(\rho,\sigma)=\min_{E=\{E_i\}:PVM}
\sum_i \mbox{tr}(\rho E_i)^{1/2} 
\mbox{tr}(\sigma E_i)^{1/2}, 
\end{eqnarray*}
where the minimum 
is taken with respect to all PVMs on ${\cal H}$. 
\end{lemma}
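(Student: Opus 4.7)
The plan is to prove the identity in two steps: first, that for every PVM $\{E_i\}$ one has $F(\rho,\sigma)\leq \sum_{i}\mbox{tr}(\rho E_{i})^{1/2}\mbox{tr}(\sigma E_{i})^{1/2}$; second, that a specific PVM attains equality, so that the infimum is realized as a minimum.

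For the inequality I would begin from the polar decomposition $\sqrt{\rho}\sqrt{\sigma}=U_{0}|\sqrt{\rho}\sqrt{\sigma}|$, which together with $|\sqrt{\rho}\sqrt{\sigma}|=\sqrt{\sigma^{1/2}\rho\sigma^{1/2}}$ lets me write $F(\rho,\sigma)=\mbox{tr}(U_{0}^{*}\sqrt{\rho}\sqrt{\sigma})$. Inserting $\mathbf{1}=\sum_{i}E_{i}$, exploiting $E_{i}=E_{i}^{2}$ for projections, and applying Cauchy--Schwarz in the Hilbert--Schmidt inner product $\langle A,B\rangle=\mbox{tr}(A^{*}B)$ gives, for each $i$,
\begin{eqnarray*}
|\mbox{tr}(U_{0}^{*}\sqrt{\rho}\,E_{i}E_{i}\sqrt{\sigma})|\leq \|E_{i}\sqrt{\rho}\,U_{0}\|_{\mathrm{HS}}\,\|E_{i}\sqrt{\sigma}\|_{\mathrm{HS}}=\mbox{tr}(\rho E_{i})^{1/2}\,\mbox{tr}(\sigma E_{i})^{1/2},
\end{eqnarray*}
where the unitarity of $U_{0}$ is used to reduce the first Hilbert--Schmidt norm to $\mbox{tr}(\rho E_{i})^{1/2}$ by cyclicity. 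Summation over $i$ together with the triangle inequality then yields the bound on $F(\rho,\sigma)$.

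For attainability I would first assume that $\sigma$ is invertible and introduce the self-adjoint operator $M:=\sigma^{-1/2}\sqrt{\sigma^{1/2}\rho\sigma^{1/2}}\,\sigma^{-1/2}$. A direct calculation from the defining property of the square root gives $M\sigma M=\rho$. Let $\{F_{j}\}$ denote the spectral projections of $M$ with eigenvalues $m_{j}\geq 0$. Because $M$ commutes with each $F_{j}$, the relation $\rho=M\sigma M$ forces $\mbox{tr}(\rho F_{j})=m_{j}^{2}\,\mbox{tr}(\sigma F_{j})$, hence $\mbox{tr}(\rho F_{j})^{1/2}\mbox{tr}(\sigma F_{j})^{1/2}=m_{j}\,\mbox{tr}(\sigma F_{j})$, and summation yields $\mbox{tr}(M\sigma)=\mbox{tr}\sqrt{\sigma^{1/2}\rho\sigma^{1/2}}=F(\rho,\sigma)$.

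The main obstacle will be the non-invertible case, in which $M$ cannot be defined directly. I would circumvent it by the standard regularization $\sigma_{\epsilon}:=(1-\epsilon)\sigma+\epsilon\mathbf{1}/\dim{\cal H}$, applying the invertible case to obtain optimizing PVMs $\{F_{j}^{(\epsilon)}\}$ and letting $\epsilon\to 0$. Finiteness of the dimension ensures compactness of the set of PVMs of bounded cardinality, so a convergent subsequence exists, and joint continuity of the fidelity together with continuity of $E\mapsto \mbox{tr}(\rho E)$ and $E\mapsto \mbox{tr}(\sigma E)$ transfers the equality to the limit, completing the proof.
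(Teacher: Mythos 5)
The paper does not prove this lemma at all---it is quoted from the cited reference of Barnum, Caves, Fuchs, Jozsa and Schumacher---so there is nothing internal to compare against; what you have written is essentially the standard Fuchs--Caves argument from that reference, and it is correct. Your upper-bound step (polar decomposition $\sqrt{\rho}\sqrt{\sigma}=U_0|\sqrt{\rho}\sqrt{\sigma}|$, insertion of $\sum_i E_i$, idempotence of projections, Hilbert--Schmidt Cauchy--Schwarz) is sound and works on any Hilbert space; note only that $U_0$ need not be unitary in general, but a partial isometry suffices since it only weakens $\Vert E_i\sqrt{\rho}U_0\Vert_{\mathrm{HS}}^2=\mbox{tr}(U_0U_0^*\sqrt{\rho}E_i\sqrt{\rho})\leq \mbox{tr}(\rho E_i)$ to an inequality, which is all you use. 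The attainability step via $M=\sigma^{-1/2}\sqrt{\sigma^{1/2}\rho\sigma^{1/2}}\,\sigma^{-1/2}$, the identity $M\sigma M=\rho$, and the spectral PVM of $M$ is the standard optimal measurement and is also correct.

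The one point you should be explicit about is that your attainability argument (the regularization $\sigma_\epsilon=(1-\epsilon)\sigma+\epsilon\mathbf{1}/\dim{\cal H}$ and the compactness of the set of PVMs with at most $\dim{\cal H}$ outcomes) is confined to finite-dimensional ${\cal H}$, whereas the lemma is stated for an arbitrary Hilbert space and the paper applies it to the apparatus space ${\cal H}_A$, which in its own examples is $L^2({\bf R})$. The direction actually needed in the proof of Theorem~\ref{mainth} is precisely that the infimum over PVMs does not exceed $F(\rho^A_0(\tau),\rho^A_1(\tau))$, i.e.\ the attainability (or at least approximability) direction, so this is not a cosmetic restriction. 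It is repairable---for $\sigma$ with trivial kernel the operator $M$ can still be defined and its spectral measure gives a PVM achieving the fidelity, and the general case follows by an approximation in trace norm rather than by adding a multiple of the identity---but as written your proof establishes the lemma only in finite dimension, and you should either state that restriction or supply the extra limiting argument.
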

The right-hand side of the above lemma 
represents the degree of overlap between 
the probability distributions 
$\{\mbox{tr}(\rho E_i)\}_i$ and $\{\mbox{tr}(\sigma E_i)\}_i$.
It is $0$ if there is no overlap and $1$ if the
probability distributions coincide with each other. 
Thus, if a process describes an error-free measurement, 
$F(\rho^A_0(\tau),\rho^A_1(\tau))=0$ must be satisfied.
\par While we introduced the fidelity to characterize the 
distinguishability of states, 
in the problem of measurement 
one usually discusses the error probability for a fixed 
observable $Z$ of the apparatus, which is called a pointer observable
or a meter observable. $Z$ has two outcomes $0$ and $1$, and it is 
a projection operator acting on ${\cal H}_A$.    
We define the following quantities for $j=0,1$:
\begin{eqnarray*}
p(1|j)&:=&\mbox{tr}(\rho^A_j(\tau)Z)
\\
p(0|j)&:=&\mbox{tr}(\rho^A_j(\tau)({\bf 1}_A-Z)).
\end{eqnarray*}
That is, $p(i|j)$ represents the conditional probability to 
obtain an outcome $i$ with respect to 
the initial state $|q_j\rangle$ of the system. 
In the error-free case, $p(i|j)=\delta_{ij}$ 
holds. 
In general, none of the $p(i|j)$'s are vanishing. 
 Let us consider an average error defined by 
\begin{eqnarray*}
P_{error}:=\frac{1}{2}(p(1|0) +p(0|1)). 
\end{eqnarray*}
This quantity is related to the fidelity by the 
following lemma.
\begin{lemma}
$P_{error}$ is related to the fidelity as, 
\begin{eqnarray*}
F(\rho^A_0(\tau),\rho^A_1(\tau))\leq 2\sqrt{P_{error}-P_{error}^2}.
\end{eqnarray*}
\end{lemma}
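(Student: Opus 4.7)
The plan is to use Lemma~\ref{dare} with a specific, non-optimal PVM and then reduce the resulting inequality to a form of the Cauchy--Schwarz inequality. Since Lemma~\ref{dare} expresses $F$ as a \emph{minimum} over PVMs on ${\cal H}_A$, any fixed PVM gives an upper bound on $F$. The natural choice is the pointer PVM $\{Z,\,{\bf 1}_A - Z\}$, because $p(i|j)$ is already defined in terms of it.

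First, I would plug this PVM into Lemma~\ref{dare} to obtain
\begin{eqnarray*}
F(\rho^A_0(\tau),\rho^A_1(\tau)) &\leq& \sqrt{p(1|0)\,p(1|1)} + \sqrt{p(0|0)\,p(0|1)}.
\end{eqnarray*}
Next, I would introduce the abbreviations $a := p(1|0)$ and $b := p(0|1)$, so that $p(0|0) = 1-a$, $p(1|1) = 1-b$, and $P_{error} = (a+b)/2$. The bound then becomes
\begin{eqnarray*}
F(\rho^A_0(\tau),\rho^A_1(\tau)) &\leq& \sqrt{a(1-b)} + \sqrt{b(1-a)} \; = \; \sqrt{a}\sqrt{1-b} + \sqrt{b}\sqrt{1-a}.
\end{eqnarray*}

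The final step is a direct application of the Cauchy--Schwarz inequality to the two vectors $(\sqrt{a},\sqrt{b})$ and $(\sqrt{1-b},\sqrt{1-a})$ in $\mathbb{R}^2$:
\begin{eqnarray*}
\sqrt{a}\sqrt{1-b} + \sqrt{b}\sqrt{1-a} &\leq& \sqrt{a+b}\,\sqrt{(1-b)+(1-a)} \; = \; \sqrt{(a+b)\bigl(2-(a+b)\bigr)}.
\end{eqnarray*}
Rewriting the right-hand side in terms of $P_{error} = (a+b)/2$ yields $\sqrt{4\,P_{error}(1-P_{error})} = 2\sqrt{P_{error}-P_{error}^2}$, which is exactly the claimed bound.

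I do not expect a genuine obstacle here: once Lemma~\ref{dare} is invoked with the pointer PVM, the remainder is an elementary two-variable inequality, and the Cauchy--Schwarz step is the cleanest way to handle it (an alternative would be to square both sides and reduce to the AM--GM inequality $2\sqrt{a(1-a)\,b(1-b)} \leq a(1-a) + b(1-b)$, but Cauchy--Schwarz is more transparent). The only slight subtlety is remembering that the PVM in Lemma~\ref{dare} acts on ${\cal H}_A$, so the pointer observable $Z$ is indeed an admissible choice.
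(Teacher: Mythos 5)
Your proof is correct, and it in fact diverges from the paper's argument at exactly the point where the paper's own derivation goes wrong. Both proofs begin identically: specialize Lemma~\ref{dare} to the pointer PVM $\{Z,{\bf 1}_A-Z\}$ on ${\cal H}_A$ to get $F\leq \sqrt{p(0|0)p(0|1)}+\sqrt{p(1|0)p(1|1)}$. The paper then squares, bounds the cross term by AM--GM in the pairing $2\sqrt{(1-a)b\cdot a(1-b)}\leq (1-a)b+a(1-b)$ (with $a=p(1|0)$, $b=p(0|1)$), and manipulates the result into $4\left(p(1|0)^2-2P_{error}\,p(1|0)+P_{error}\right)$; its final step claims this is $\leq 4(P_{error}-P_{error}^2)$, which is equivalent to $(p(1|0)-P_{error})^2\leq 0$ and hence fails whenever $p(1|0)\neq p(0|1)$ (e.g.\ $a=1/2$, $b=0$ gives $1\not\leq 3/4$). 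The lemma itself is nonetheless true, and your Cauchy--Schwarz pairing of $(\sqrt{a},\sqrt{b})$ with $(\sqrt{1-b},\sqrt{1-a})$ --- equivalently, your alternative AM--GM applied to $a(1-a)$ and $b(1-b)$ rather than to $(1-a)b$ and $a(1-b)$ --- yields $F\leq\sqrt{(a+b)(2-a-b)}=2\sqrt{P_{error}-P_{error}^2}$ directly and without loss. So your route is not merely different; it supplies the correct argument where the published chain of inequalities breaks.
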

\begin{proof}
Thanks to Lemma \ref{dare}, 
we obtain 
\begin{eqnarray*}
F(\rho^A_0(\tau),\rho^A_1(\tau))^2
&\leq& \left( \sqrt{p(0|0)}\sqrt{p(0|1)}
+\sqrt{p(1|0)}\sqrt{p(1|1)}\right)^2
\\
&=&
p(0|0)p(0|1) +p(1|0)p(1|1)
+2 \sqrt{p(0|0)p(0|1) p(1|0)p(1|1)}
\\
&\leq &
2(p(0|0)p(0|1) +p(1|0)p(1|1))
\\
&=& 
2\left(
(1-p(1|0))p(0|1) +p(1|0)(1-p(0|1))
\right)
\\
&=& 4 \left(
 P_{error}- p(1|0)p(0|1) 
\right) 
\\
&=& 4\left(
P_{error} - p(1|0)(2P_{error} -p(1|0))
\right)
\\
&=& 4\left(
p(1|0)^2 -2P_{error} p(1|0) +P_{error}
\right)\leq 4(P_{error}-P_{error}^2
).
\end{eqnarray*}
\end{proof}
The following is our main theorem. 
\begin{theorem}\label{mainth}
Let us consider a measurement process 
on a two-level system, as introduced above. 
That is, ${\cal H}_S$ denotes the two-dimensional 
Hilbert space of the system, ${\cal H}_A$ denotes the
Hilbert space of the apparatus, and $H=H_S \otimes {\bf 1}_A
+{\bf 1}_S \otimes H_A +V$ denotes the Hamiltonian 
describing the interaction process. 
We consider the measurement process for the observable 
$Q$ that has a pair of eigenstates: 
$Q|q_1\rangle =|q_1\rangle$ and $Q|q_0\rangle =0$. 
For any initial state $|\Omega\rangle \in {\cal H}_A$ 
of the apparatus and time interval $\tau$ for 
the process, 
the following inequality 
holds:
\begin{eqnarray*}
\Vert [Q, H_S]\Vert
\leq \Vert H_S \Vert F(\rho^A_0(\tau),\rho^A_1(\tau))
+\frac{\tau}{\hbar}
\Vert [V,H_S\otimes {\bf 1}_A]\Vert,
\end{eqnarray*} 
where
$\Vert \cdot \Vert$ is the operator norm defined by 
$\Vert A\Vert:=\sup_{|\phi\rangle \neq 0, |\phi\rangle \in {\cal H}}
\frac{\Vert A|\phi\rangle\Vert}{\Vert \phi\Vert}$
for an operator $A$ on ${\cal H}$, and  
 $F(\rho^A_0(\tau),\rho^A_1(\tau))$ represents 
the fidelity between a pair of states on the apparatus after 
the interaction.
\end{theorem}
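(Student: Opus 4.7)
The plan is to reduce $\Vert[Q,H_S]\Vert$ to a single off-diagonal matrix element of $H_S\otimes{\bf 1}_A$, propagate it from $t=0$ to $t=\tau$ using the Heisenberg equation, and finally bound the surviving matrix element at time $\tau$ by the fidelity via Lemma \ref{dare}.

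I would first work in the basis $\{|q_0\rangle,|q_1\rangle\}$. Since $Q=|q_1\rangle\langle q_1|$, a direct calculation gives the skew-Hermitian commutator $[Q,H_S]=\langle q_1|H_S|q_0\rangle\,|q_1\rangle\langle q_0|-\langle q_0|H_S|q_1\rangle\,|q_0\rangle\langle q_1|$, whose operator norm is $|\langle q_1|H_S|q_0\rangle|$. Introducing the purifications $|\psi_j(t)\rangle:=U(t)(|q_j\rangle\otimes|\Omega\rangle)$, which satisfy $\rho^A_j(\tau)=\mbox{tr}_{{\cal H}_S}|\psi_j(\tau)\rangle\langle\psi_j(\tau)|$, this becomes $\Vert[Q,H_S]\Vert=|\langle\psi_1(0)|H_S\otimes{\bf 1}_A|\psi_0(0)\rangle|$.

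The Heisenberg-picture step is routine: since $H_S\otimes{\bf 1}_A$ commutes with the non-interacting part of $H$, one has $[H,H_S\otimes{\bf 1}_A]=[V,H_S\otimes{\bf 1}_A]$, and integrating the Heisenberg equation gives
\begin{equation*}
U(\tau)^*(H_S\otimes{\bf 1}_A)U(\tau)-H_S\otimes{\bf 1}_A=\frac{i}{\hbar}\int_0^\tau U(t)^*[V,H_S\otimes{\bf 1}_A]U(t)\,dt,
\end{equation*}
whose operator norm is at most $(\tau/\hbar)\Vert[V,H_S\otimes{\bf 1}_A]\Vert$. Sandwiching against the unit vectors $|\psi_0(0)\rangle$ and $|\psi_1(0)\rangle$ and using $|\psi_j(\tau)\rangle=U(\tau)|\psi_j(0)\rangle$ will yield
\begin{equation*}
|\langle q_1|H_S|q_0\rangle|\leq|\langle\psi_1(\tau)|H_S\otimes{\bf 1}_A|\psi_0(\tau)\rangle|+\frac{\tau}{\hbar}\Vert[V,H_S\otimes{\bf 1}_A]\Vert.
\end{equation*}

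The main obstacle I anticipate is the last step: bounding $|\langle\psi_1(\tau)|H_S\otimes{\bf 1}_A|\psi_0(\tau)\rangle|$ by $\Vert H_S\Vert\,F(\rho^A_0(\tau),\rho^A_1(\tau))$. For any PVM $E=\{E_i\}$ on ${\cal H}_A$ one has $\sum_i{\bf 1}_S\otimes E_i={\bf 1}$ and $E_i^2=E_i$, so the matrix element decomposes as
\begin{equation*}
\sum_i\langle\psi_1(\tau)|({\bf 1}_S\otimes E_i)(H_S\otimes{\bf 1}_A)({\bf 1}_S\otimes E_i)|\psi_0(\tau)\rangle.
\end{equation*}
I would then apply Cauchy-Schwarz term by term, using $\Vert H_S\otimes{\bf 1}_A\Vert=\Vert H_S\Vert$ and $\Vert({\bf 1}_S\otimes E_i)|\psi_j(\tau)\rangle\Vert^2=\mbox{tr}(\rho^A_j(\tau)E_i)$, to obtain the upper bound $\Vert H_S\Vert\sum_i\sqrt{\mbox{tr}(\rho^A_0(\tau)E_i)\mbox{tr}(\rho^A_1(\tau)E_i)}$. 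Taking the infimum over PVMs and invoking Lemma \ref{dare} delivers the fidelity bound, which combined with the Heisenberg-picture inequality proves the theorem.
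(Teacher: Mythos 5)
Your proposal is correct and follows essentially the same route as the paper: the identity $\Vert[Q,H_S]\Vert=|\langle q_0|H_S|q_1\rangle|$, the Heisenberg-equation bound $\Vert U(\tau)^*(H_S\otimes{\bf 1}_A)U(\tau)-H_S\otimes{\bf 1}_A\Vert\leq\frac{\tau}{\hbar}\Vert[V,H_S\otimes{\bf 1}_A]\Vert$, and the PVM decomposition with Cauchy--Schwarz feeding into Lemma \ref{dare} are exactly the paper's three ingredients. The only (cosmetic) difference is that you apply the triangle inequality directly to $H_S\otimes{\bf 1}_A$ rather than starting from the conservation law $H=U(\tau)^*HU(\tau)$ and cancelling the $H_A$ and $V$ terms, which yields the same two terms.
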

\begin{proof}
Note that 
time evolution preserves 
the total Hamiltonian. 
That is, $H=U(\tau)^* H U(\tau)$ holds. 
We operate on it 
with $\langle q_0,\Omega |\cdot |q_1,\Omega\rangle$
to obtain, 
\begin{eqnarray}
&& \langle q_0 | H_S |q_1\rangle
+\langle q_0,\Omega |{\bf 1}_S \otimes H_A |q_1,\Omega \rangle 
+\langle q_0,\Omega |V |q_1,\Omega \rangle 
\nonumber \\
&& = \langle q_0,\Omega|U(\tau)^* (H_S\otimes {\bf 1}_A)
 U(\tau)|q_1,\Omega \rangle 
+\langle q_0,\Omega |U(\tau)^*({\bf 1}_S \otimes  H_A)
 U(\tau) |q_1,\Omega \rangle 
+\langle q_0,\Omega |U(\tau)^* V U(\tau)|q_1,\Omega \rangle, 
\label{saisho}
\end{eqnarray}
where we keep the second term of the left-hand side 
although  
$\langle q_0,\Omega |{\bf 1}_S \otimes H_A |q_1,\Omega \rangle
=0$ holds owing to the orthogonality 
of $|q_0\rangle$ and $|q_1\rangle$. 
Equation (\ref{saisho}) can be further reduced as
\begin{eqnarray}
|\langle q_0 | H_S |q_1 \rangle |
\leq |\langle q_0,\Omega |U(\tau)^* (H_S\otimes 
{\bf 1}_A) U(\tau) |q_1,\Omega \rangle |
+ |\langle q_0,\Omega |
U(\tau)^*({\bf 1}_S\otimes H_A 
+V)U(\tau) -({\bf 1}_S \otimes H_A+V)| q_1,\Omega \rangle |, 
\label{tugi}
\end{eqnarray}
where the triangular inequality was used. 
The first term of the right-hand side can be 
bounded as follows. Let us consider an arbitrary 
PVM $E=\{E_i\}_i$ on ${\cal H}_A$. 
Because $\sum_i E_i={\bf 1}_A$ holds, it follows that
\begin{eqnarray*}
|\langle q_0,\Omega |U(\tau)^* (H_S\otimes {\bf 1}_A)
 U(\tau) |q_1,\Omega \rangle |
=|\sum_i \langle q_0,\Omega |U(\tau)^*({\bf 1}_S
\otimes E_i)( H_S \otimes {\bf 1}_A) 
U(\tau) |q_1,\Omega \rangle |.
\end{eqnarray*}
The commutativity between ${\bf 1}_S \otimes E_i$ and $H_S
\otimes {\bf 1}_A$ allows the further derivation 
\begin{eqnarray*}
&&|\sum_i \langle q_0,\Omega |U(\tau)^*({\bf 1}_S\otimes  E_i)
( H_S \otimes {\bf 1}_A) U(\tau) 
|q_1,\Omega \rangle |\\
&=&|\sum_i q_0,\Omega |U(\tau)^*({\bf 1}_S \otimes  E_i) 
(H_S\otimes {\bf 1}_A)( {\bf 1}_S \otimes 
E_i) U(\tau) |q_1,\Omega \rangle |
\\
&\leq &
\sum_i |\langle q_0,\Omega |U(\tau)^* ({\bf 1}_S \otimes E_i)
( H_S \otimes {\bf 1}_A) ({\bf 1}_S \otimes E_i) 
U(\tau) |q_1,\Omega \rangle | \\
&\leq &
\sum_i \Vert H_S \Vert 
\langle q_0, \Omega |U(\tau)^* 
({\bf 1}_S \otimes E_i) U(\tau) |q_0,\Omega\rangle^{1/2}
\langle q_1,\Omega |U(\tau)^* ({\bf 1}_S \otimes E_i)
 U(\tau)|q_1,\Omega\rangle^{1/2}
\\
&=& 
\Vert H_S\Vert 
\sum_i \mbox{tr}(\rho^A_0(\tau) E_i)^{1/2}
\mbox{tr}(\rho^A_1(\tau)E_i)^{1/2}, 
\end{eqnarray*}
where the Cauchy-Schwarz inequality was used. 
Because the choice of a PVM $\{E_i\}$ is arbitrary, 
applying Lemma \ref{dare} we obtain 
\begin{eqnarray*}
|\langle q_0,\Omega |U(\tau)^* (H_S\otimes {\bf 1}_A)
 U(\tau) |q_1,\Omega \rangle |
\leq \Vert H_S \Vert F(\rho^A_0(\tau),\rho^A_1(\tau)).
\end{eqnarray*}
The second term of (\ref{tugi}) can be 
bounded as follows. 
Applying the conservation of the total Hamiltonian we obtain 
\begin{eqnarray*}
U(\tau)^*({\bf 1}_S \otimes H_A +V)U(\tau) -
({\bf 1}_S \otimes H_A+V)
=H_S\otimes {\bf 1}_A -U(\tau)^* (H_S\otimes {\bf 1}_A) U(\tau). 
\end{eqnarray*}
Its right-hand side is bounded by using 
the Heisenberg equation. 
Because $U(t)^* (H_S \otimes {\bf 1}_A) U(t)$ satisfies 
\begin{eqnarray*}
i\hbar \frac{d}{dt}U(t)^* (H_S\otimes {\bf 1}_A) U(t)
=U(t)^*[H_S\otimes {\bf 1}_A, H]U(t),
\end{eqnarray*}
we obtain
\begin{eqnarray*}
U(\tau)^* (H_S\otimes {\bf 1}_A) U(\tau) -H_S\otimes {\bf 1}_A 
= \frac{1}{i\hbar} 
\int^{\tau}_0 dt U(t)^* [H_S\otimes {\bf 1}_A, V] U(t), 
\end{eqnarray*}
which derives 
\begin{eqnarray*}
|\langle q_0,\Omega |U(\tau)^* 
({\bf 1}_S\otimes H_A+V)U(\tau)-({\bf 1}_S\otimes H_A +V)|q_1,\Omega\rangle |
&=&
|\langle q_0,\Omega |
H_S\otimes {\bf 1}_A -U(\tau)^* (H_S \otimes {\bf 1}_A)
U(\tau)|q_1,\Omega\rangle |
\\
&\leq&
\Vert H_S \otimes {\bf 1}_A -
U(\tau)^* (H_S \otimes {\bf 1}_A) U(\tau) \Vert 
\\
&\leq & 
\frac{1}{\hbar}\int^{\tau}_0 dt \Vert 
U(t)^* [H_S \otimes {\bf 1}_A, V]U(t)\Vert 
\\
&=& 
\frac{\tau}{\hbar} \Vert [H_S\otimes {\bf 1}_A, V]\Vert .
\end{eqnarray*}
Finally, we analyze the left-hand side of Equation (\ref{tugi}). 
We derive the equality $|\langle q_0 |H_S|q_1\rangle |
=\Vert [Q,H_S]\Vert$ in the following. 
Because $i[Q,H_S]$ is a self-adjoint operator, 
$\Vert i[Q,H_S]\Vert = \max_{|\psi\rangle:\Vert |\psi\rangle
\Vert =1}
| \langle \psi |i[Q,H_S] |\psi\rangle |$ holds. 
For any unit vector $|\psi\rangle =
c_0 |q_0\rangle +c_1 |q_1 \rangle$, we have 
\begin{eqnarray*}
| \langle \psi |i[Q,H_S]|\psi \rangle |
=2 | \mbox{Im} (\overline{c_0}c_1 \langle q_0 |H_S |q_1\rangle ) |.  
\end{eqnarray*}
Its right-hand side can be bounded as 
\begin{eqnarray*}
2 | \mbox{Im}(
 \overline{c_0}c_1 \langle q_0 |H_S |q_1\rangle )|
&\leq& 2 | \overline{c_0}| |c_1 | |\langle q_0 |H_S |q_1\rangle| \\
&\leq &(|c_0|^2 +|c_1 |^2) |\langle q_0 |H_S |q_1\rangle | \\
&=&  |\langle q_0 |H_S |q_1\rangle | . 
\end{eqnarray*}
In the above inequality, 
if we insert $c_0 =\frac{\langle q_0 |H_S |q_1 \rangle}
{\sqrt{2}|\langle q_0 |H_S |q_1 \rangle |}$
and $c_1 =\frac{1}{\sqrt{2}}$, 
we obtain an equality. 
Thus, we proved 
\begin{eqnarray*}
\Vert [Q, H_S]\Vert
=\Vert i[Q,H_S]\Vert 
=|\langle q_0 |H_S |q_1\rangle |.
\end{eqnarray*}
It ends the proof.
\end{proof}
These corollaries immediately follow: 
\begin{corollary}
Based on the above theorem, the following holds for any 
pointer observable $Z$: 
\begin{eqnarray*}
\Vert [Q,H_S] \Vert \leq 
2 \Vert H_S \Vert \sqrt{P_{error} -P_{error}^2}+
\frac{\tau}{\hbar} \Vert 
[V, H_S\otimes {\bf 1}_A]\Vert. 
\end{eqnarray*}
\end{corollary}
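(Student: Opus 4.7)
The plan is to obtain this corollary as a direct substitution: chain the inequality of Theorem \ref{mainth} with the preceding lemma relating fidelity to the error probability. No new machinery is required, since both ingredients have already been proven in the excerpt and the right-hand side of the corollary is structurally identical to that of the theorem, except that the fidelity factor $F(\rho^A_0(\tau),\rho^A_1(\tau))$ is replaced by the upper bound $2\sqrt{P_{error}-P_{error}^2}$.

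First I would invoke Theorem \ref{mainth} to write
\begin{eqnarray*}
\Vert [Q,H_S]\Vert \leq \Vert H_S\Vert\, F(\rho^A_0(\tau),\rho^A_1(\tau)) + \frac{\tau}{\hbar}\Vert [V,H_S\otimes {\bf 1}_A]\Vert.
\end{eqnarray*}
Then I would apply the unnumbered lemma preceding the theorem, namely $F(\rho^A_0(\tau),\rho^A_1(\tau))\leq 2\sqrt{P_{error}-P_{error}^2}$, to the first term on the right-hand side. Since $\Vert H_S\Vert\geq 0$, the inequality is preserved under multiplication, and substitution yields the claim directly.

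There is essentially no obstacle: the only observation to record is that the bound on fidelity in terms of $P_{error}$ holds for an arbitrary pointer observable $Z$, so the resulting inequality inherits this universality. In particular, the corollary is stated for \emph{any} pointer observable $Z$, and this is exactly consistent with the fact that the fidelity on the apparatus, being defined without reference to a specific $Z$, provides the tightest bound one could hope for after fixing such a $Z$. Thus the proof reduces to a one-line substitution, and the corollary is genuinely "immediate" as the text asserts.
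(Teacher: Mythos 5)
Your proof is correct and matches the paper's intent exactly: the paper states the corollary "immediately follows," meaning precisely the substitution of the bound $F(\rho^A_0(\tau),\rho^A_1(\tau))\leq 2\sqrt{P_{error}-P_{error}^2}$ from the preceding lemma into the inequality of Theorem \ref{mainth}. Nothing further is needed.
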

\begin{corollary}\label{cor2}
In order to attain an error-free measurement, 
the interaction Hamiltonian $V$ and 
the time interval $\tau$ must satisfy
\begin{eqnarray*}
\tau \cdot \Vert 
[V, H_S\otimes {\bf 1}_A]\Vert
\geq \hbar \Vert [Q,H_S]\Vert
. 
\end{eqnarray*}
\end{corollary}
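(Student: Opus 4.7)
The plan is to obtain Corollary \ref{cor2} as an immediate consequence of Theorem \ref{mainth}, so the work reduces to identifying what the hypothesis "error-free measurement" contributes to the inequality in the theorem.

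First I would unpack the hypothesis. As already noted in the text following Lemma \ref{dare}, an error-free measurement means that there exists a pointer observable $Z$ on the apparatus whose statistics perfectly discriminate between $\rho^A_0(\tau)$ and $\rho^A_1(\tau)$, i.e.\ $p(i|j)=\delta_{ij}$. By the operational characterization in Lemma \ref{dare}, choosing the PVM $\{Z,{\bf 1}_A-Z\}$ in the minimization shows
\begin{eqnarray*}
F(\rho^A_0(\tau),\rho^A_1(\tau)) \leq \sqrt{p(0|0)p(0|1)} + \sqrt{p(1|0)p(1|1)} = 0,
\end{eqnarray*}
so the fidelity between the two apparatus states vanishes. (Equivalently, one could invoke the second lemma, which gives $F\leq 2\sqrt{P_{error}-P_{error}^2}$, and note that $P_{error}=0$.)

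Next I would substitute $F(\rho^A_0(\tau),\rho^A_1(\tau))=0$ into the inequality of Theorem \ref{mainth}. This collapses the first term on the right-hand side, leaving
\begin{eqnarray*}
\Vert [Q,H_S]\Vert \leq \frac{\tau}{\hbar}\Vert [V,H_S\otimes {\bf 1}_A]\Vert,
\end{eqnarray*}
and a trivial rearrangement produces the claimed bound $\tau\cdot\Vert [V,H_S\otimes {\bf 1}_A]\Vert \geq \hbar\Vert [Q,H_S]\Vert$.

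There is essentially no obstacle: all the analytic content has been packed into Theorem \ref{mainth}, and the corollary is the specialization obtained by setting the fidelity term to zero. The only minor care needed is the initial step of translating "error-free" into the statement $F=0$, which is handled cleanly by Lemma \ref{dare}.
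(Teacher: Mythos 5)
Your proposal is correct and matches the paper's intent: the paper states that the corollaries ``immediately follow'' from Theorem \ref{mainth}, and the intended route is exactly yours --- error-free measurement forces $F(\rho^A_0(\tau),\rho^A_1(\tau))=0$ (as already noted in the text after Lemma \ref{dare}), which kills the first term on the right-hand side of the theorem's inequality and yields the claimed bound after rearrangement.
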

The above theorem and corollaries show that if a measured observable 
does not commute with the system Hamiltonian, 
then there exists a non-trivial 
trade-off relationship among the strength of interaction, 
the time interval, and the accuracy of the measurement. 
Particularly, according to Corollary \ref{cor2}, 
in order to achieve an error-free measurement process for such an observable, 
the interaction Hamiltonian must be noncommutative with 
the system Hamiltonian.
Note that the inequalities do not 
contain the Hamiltonian of the apparatus. 
In the discussion, we give a brief interpretation 
of this result. 
\subsection{When the observable 
commutes with the system Hamiltonian}
\label{subsec:commute}
When the observable $Q$ commutes with $H_S$, 
the above theorem  becomes trivial. 
The following example shows that in such a case 
$V$ can commute with $H_S$ and $\tau$ can be arbitrarily 
small even for an error-free measurement.
\par 
Let us consider the standard model of the measurement process 
\cite{BuschStandard}. 
Assume that the system is a two-level system 
described by the two-dimensional Hilbert space 
${\cal H}_S$, and assume that the apparatus is 
a particle moving in one degree of freedom 
so that ${\cal H}_A=L^2({\bf R})$. 
The observable to be measured is denoted by $Q$, 
which is a projection operator with eigenstates 
$|q_1\rangle$ and $|q_0\rangle$. 
Assume that the total Hamiltonian is defined by 
\begin{eqnarray*}
H&=&H_S + H_A +V \\
&=& {\bf 0} +{\bf 0}
+ Q \otimes P_{\cal A}, 
\end{eqnarray*}
where $P_{\cal A}$ is the momentum operator 
of the apparatus. Because $H_S$ is trivial, 
both $[Q,H_S]=0$ and $[V,H_S\otimes {\bf 1}_A]=0$ hold. 
The initial states $|q_0\rangle \otimes |\Omega \rangle$ 
and $|q_1\rangle \otimes |\Omega\rangle$
for any $|\Omega\rangle \in {\cal H}_A$ evolve as
\begin{eqnarray*}
|q_0\rangle \otimes |\Omega\rangle
&\to& |q_0\rangle \otimes |\Omega_0 \rangle ,
\\
|q_1\rangle\otimes |\Omega\rangle 
&\to& |q_1\rangle \otimes |\Omega_{\frac{\tau}{\hbar}}\rangle,
\end{eqnarray*}
where $|\Omega_{\lambda}\rangle$ is defined by 
$\langle x|\Omega_{\lambda}\rangle 
:=\langle x-\lambda |\Omega\rangle$ for any $\lambda \in {\bf R}$
in the position representation. 
Therefore, for any $\tau>0$, 
if we prepare $|\Omega\rangle$ so that 
it is sharply localized in the position representation, 
$|\Omega_{\frac{\tau}{\hbar}}\rangle$ and $|\Omega_0\rangle$ 
become disjoint in the position representation, and are 
perfectly distinguishable. 
\par
While the interaction Hamiltonian in the above example 
is unbounded in norm, the following example shows that 
the norm of the interaction Hamiltonian can be made arbitrarily small. 
We consider a two-level system interacting with 
a two-level apparatus,  
whose dynamics is governed by 
$H=V= \lambda (|q_1 \rangle \langle q_1 |
\otimes |1\rangle \langle 1|
+ |q_0\rangle \langle q_0| \otimes |0\rangle \langle 0|)$ for $\lambda >0$, 
where $|q_0\rangle $ and $|q_1\rangle $ are eigenstates of $Q$. 
If we put the initial state of the apparatus as 
$|\Omega\rangle :=\frac{1}{\sqrt{2}}(|0\rangle + |1\rangle)$, 
the initial states $|q_1\rangle \otimes |\Omega \rangle $ and 
$|q_0\rangle \otimes |\Omega \rangle $ evolve as
\begin{eqnarray*}
|q_1 \rangle \otimes |\Omega\rangle 
&\to & 
|q_1\rangle \otimes \frac{1}{\sqrt{2}}(|0\rangle - i |1\rangle)
\\
|q_0\rangle \otimes |\Omega\rangle &\to& 
|q_0\rangle \otimes \frac{1}{\sqrt{2}}(|0\rangle +i |1\rangle)
\end{eqnarray*} 
in time $\tau:=\frac{\pi \hbar}{2\lambda}$. 
Note that $\Vert V\Vert =\lambda$ can be arbitrarily small. 
\section{Discussion}
In this paper, we applied Hamiltonian formalism to 
 the measurement process 
of a two-level system. 
For a measured observable that does not commute with 
the system Hamiltonian, 
 a non-trivial trade-off 
 relationship among the strength of interaction, 
the time interval, and the accuracy of the measurement has been obtained. 
In particular, 
in order to achieve an error-free measurement process for such an observable, 
the interaction Hamiltonian must be noncommutative with 
the system Hamiltonian.
We show that this impossibility result can be derived 
by the uncertainty principle for joint measurement. 
Let us consider a two-level quantum system ${\cal H}_S$
interacting with an apparatus ${\cal H}_A$. 
We denote the total Hamiltonian by $H=
H_S\otimes {\bf 1}_A+{\bf 1}_S\otimes H_A +V$. 
The system Hamiltonian $H_S$ is diagonalized as 
$H_S=\epsilon_1 |\epsilon_1\rangle\langle \epsilon_1|
+\epsilon_0 |\epsilon_0\rangle \langle \epsilon_0|$ 
($\epsilon_0\neq \epsilon_1$). 
The unitary evolution $U(\tau) =
\exp(-i \frac{H \tau}{\hbar})$ and the initial state 
of the apparatus 
$|\Omega\rangle$ define an 
isometry $W:{\cal H}_S \to {\cal H}_S \otimes {\cal H}_A$ by 
$W|\psi\rangle :=U(\tau)|\psi\rangle \otimes |\Omega\rangle$. 
If the process describes an error-free measurement of $Q$, 
there exists a PVM $M=\{M_0,M_1\}$ on ${\cal H}_A$ 
satisfying $|q_j\rangle\langle q_j|
=W^*({\bf 1}_S\otimes M_j)W$ for $j=0,1$. 
In addition, if $V$ commutes with $H_S$, 
one obtains $|\epsilon_n\rangle 
\langle \epsilon_n|=W^*(|\epsilon_n\rangle 
\langle \epsilon_n| \otimes {\bf 1}_A)W$ for $n=0,1$. 
Thus we can introduce a positive-operator-valued measure $Y=\{Y_{nj}\}$ by 
$Y_{nj}:=W^*(|\epsilon_n\rangle \langle \epsilon_n|\otimes M_j)W$ 
which jointly measures $Q$ and $H_S$. 
That is, $Y_{n0}+Y_{n1}=|\epsilon_n\rangle \langle \epsilon_n|$ 
and $Y_{0j}+Y_{1j}=|q_j\rangle \langle q_j|$ hold for $n,j=0,1$.   
According to the uncertainty principle for joint measurement, 
these relations can be true only for commutative PVMs $\{|\epsilon_0\rangle
\langle \epsilon_0|, |\epsilon_1\rangle \langle \epsilon_1 |\}$ and
$\{|q_0\rangle \langle q_0 |, |q_1\rangle \langle q_1 |\}$ 
(see for e.g. \cite{Buschbook}). 
\par
As an example showing that an interaction Hamiltonian 
satisfying $[V, H_{S}]\neq 0$ helps reducing error, 
we consider a modified version of the standard model. 
Setting $|\Omega\rangle$ sharply located, 
we denote by $\tau >0$ the time interval required to 
accomplish error-free measurement in the standard model 
discussed in Sec. \ref{subsec:commute}. 
Let us consider a modified model described by 
$H=H_S\otimes {\bf 1} + Q\otimes P_{A}$, 
where $[H_S,Q]\neq 0$ holds. 
Because the time evolution $U(\tau)
=\exp(-i\frac{H \tau}{\hbar})$ satisfies 
$\Vert U(\tau) -\exp(-i \frac{Q\otimes P_A}{\hbar}\tau)\Vert 
\leq \frac{\tau}{\hbar}\Vert H_S\Vert$, 
we obtain an estimate $F(\rho^A_0(\tau), \rho^A_1(\tau))\leq 
2\sqrt{\frac{2\tau}{\hbar}\Vert H_S \Vert}$. 
Note that it is possible to make $\tau$ arbitrarily small by making 
the initial state $|\Omega\rangle$ sufficiently sharp. 
\par
Similarly, it is possible to treat a modified version of the second example 
in Sec. \ref{subsec:commute}. 
If we take $\lambda >0$ sufficiently large 
for $H=H_S\otimes {\bf 1}_A+ V= 
H_S\otimes {\bf 1}_A+ \lambda (|q_1 \rangle \langle q_1 |
\otimes |1\rangle \langle 1|
+ |q_0\rangle \langle q_0| \otimes |0\rangle \langle 0|)$, 
$\tau =\frac{\pi \hbar}{2\lambda}$ becomes small and the fidelity 
between final states on the apparatus can be made arbitrarily small. 
%
\par
While the obtained inequality is non-trivial, 
it may not always be strong. 
In fact, although in the proof of theorem \ref{mainth}, 
we have used the inequality
\begin{eqnarray*}
|\langle q_0,\Omega |H_S\otimes {\bf 1}_A-U(\tau)^* 
(H_S\otimes {\bf 1}_A) U(\tau)|q_1,\Omega\rangle |
&\leq&
\Vert H_S\otimes {\bf 1}_A -U(\tau)^* (H_S\otimes {\bf 1}_A) U(\tau) \Vert 
\\
&\leq & 
\frac{\tau}{\hbar} \Vert [H_S\otimes {\bf 1}_A, V]\Vert, 
\end{eqnarray*}
which was the origin of the linear 
term with respect to $\tau$, 
this bound is not strong because it does not take into 
consideration the dynamics in detail. 
The left-hand side of the above inequality 
can be written as 
\begin{eqnarray*}
|\langle q_0,\Omega |H_S\otimes {\bf 1}_A -U(\tau)^* 
(H_S\otimes {\bf 1}_A)U(\tau)|q_1,\Omega\rangle |
&\leq &
\frac{1}{\hbar}\int^{\tau}_0 dt 
\left|
\langle q_0,\Omega |
 U(t)^* [H_S\otimes {\bf 1}_A,V]U(t)
|q_1,\Omega\rangle
\right|. 
\end{eqnarray*}
The last term contains 
the correlation function $\langle q_0,\Omega |
 U(t)^* [H_S\otimes {\bf 1}_A,V]U(t)
|q_1,\Omega\rangle$ that, in general, decays rapidly 
for a large apparatus. Therefore, the 
term may not grow proportionally to 
$\tau$ in physically realistic models. 
We hope to investigate this problem in the future. 
\\
{\bf Acknowledgments:} 
I would like to thank an anonymous referee for helpful comments. 
%


\begin{thebibliography}{9}

\bibitem{vN}J.~von Neumann, {\em Mathematical Foundations of 
Quantum Mechanics}, Princeton University Press, Princeton, 1955. 


\bibitem{Buschbook}P.~Busch, M.~Grabowski, and 
P.~Lahti, {\em Operational Quantum Physics}, Springer-Verlag, 
Berlin, 1995.

\bibitem{BuschUncertainty}
 P. Busch, T. Heinonen, and P. Lahti,  
Physics Reports {\bf 452}, 155-176 (2007).

\bibitem{Buscherrorbar}
P.~Busch and D.~B.~Pearson, 
J. Math. Phys. {\bf 48}, 082103 (2007).  

\bibitem{Werner}R.~F.~Werner,
Quantum Inform. Comput. {\bf 4}, 546 (2004). 

\bibitem{Appleby}D.~M.~Appleby, Int. J. Theor. Phys. {\bf 37}, 
1491 (1998). 

\bibitem{OzawaUncertain}M.~Ozawa, Ann. Phys. {\bf 311}, 350 (2004). 

\bibitem{Janssens}B.~Janssens, e-print arXiv:quant-ph/0606093. 

\bibitem{MiyaHeisen}
T.~Miyadera and H.~Imai, 
Phys. Rev. A {\bf 78}, 052119 (2008).

\bibitem{Wigner}E.~P.~Wigner, 
Zeitschrigt f\"ur Physik, {\bf 133}, 101 (1952). 

\bibitem{ArakiYanase}H.~Araki and M.~M.~Yanase, 
Phys. Rev. {\bf 120}, 622 (1960).

\bibitem{Ozawa}M.~Ozawa, Phys. Rev. Lett. {\bf 88}, 
050402 (2002).

\bibitem{Beltrametti}
E.~G.~Beltrametti, G.~Cassinelli, and P.~Lahti, 
J. Math. Phys. {\bf 31}, 91 (1990).

\bibitem{MiyaWAY}
T.~Miyadera and H.~Imai, 
Phys. Rev. A {\bf 74}, 024101 (2006). 

\bibitem{Loveridge}
L.~Loveridge and P.~Busch, arXiv:1012.4362, to appear in Eur. Phys. J. D (2011).

\bibitem{Fidelity}
H.~Barnum, C.~M.~Caves, C.~A.~Fuchs, R.~Jozsa, 
and B.~Schumacher, Phys. Rev. Lett. {\bf 76}, 2818 (1996).

\bibitem{BuschStandard}
P.~Busch and P.~Lahti, Found. Phys. {\bf 26}, 875 (1996).





\end{thebibliography}
\end{document}